\documentclass[11pt]{llncs}
\usepackage[smallbib]{}
\usepackage{makeidx}

\frontmatter          % for the preliminaries

\usepackage{amssymb,amsfonts,amsthm}
\usepackage{amsmath,epsfig,url,graphicx,xspace}
\usepackage{algorithm}
\usepackage{algorithmic}
\usepackage{fullpage}
\pagenumbering{arabic}

\newcommand{\expect}{{\mathbf{E}}}

\newcommand{\groundset}{{\mathcal{U}}}
\newcommand{\indsets}{{\mathcal{I}}}

\newcommand{\val}{{\operatorname{val}}}

\begin{document}

\title{Randomized Online Algorithms for the Buyback Problem}

\author{Ashwinkumar B.V. \and Robert Kleinberg}
\institute{Cornell University, Ithaca, NY \\ \email{\{ashwin85,rdk\}@cs.cornell.edu}}

\maketitle

\begin{abstract}
In the matroid buyback problem,
an algorithm observes a sequence of bids and 
must decide whether to accept each bid at the moment it arrives,
subject to a matroid constraint on the set of accepted bids.
Decisions to reject bids are irrevocable, whereas decisions
to accept bids may be canceled at a cost which is a fixed 
fraction of the bid value.  
We present a new randomized algorithm for this problem, and we 
prove matching upper and lower bounds to establish that the 
competitive ratio of this algorithm, against an oblivious 
adversary, is the best possible.  We also observe that when 
the adversary is adaptive, no randomized algorithm can improve
the competitive ratio of the optimal deterministic algorithm.
Thus, our work completely resolves the question of what
competitive ratios can be achieved by randomized algorithms
for the matroid buyback problem.
\end{abstract}

\section{Introduction}
\label{sec:intro}

Imagine a seller allocating a limited inventory
(e.g. impressions of a banner ad on a specified website
at a specified time in the future) to a sequence of 
potential buyers who arrive sequentially, submit
bids at their arrival time, and expect allocation
decisions to be made immediately after submitting
their bid.  An informed seller who 
knows the entire bid sequence can achieve much higher
profits than an uninformed seller who discovers the
bids online, because of the possibility that a very
large bid is received after the uninformed seller has
already allocated the inventory.  A number of recent
papers~\cite{BHK-ec09,CFMP-soda09} have proposed a 
model that offsets this possibility by allowing the
uninformed seller to cancel earlier allocation decisions,
subject to a penalty which is a fixed fraction of the canceled bid 
value.  This option of canceling an allocation and
paying a penalty is referred to as \emph{buyback}, and
we refer to online allocation problems with a buyback
option as \emph{buyback problems}.

Buyback problems have both theoretical and practical 
appeal.  In fact, Babaioff et al.~\cite{BHK-ec09} report
that this model of selling was described to them by the
ad marketing group at a major Internet software company.
Constantin et al.~\cite{CFMP-soda09} cite numerous other
applications including allocation of TV, radio, and newsprint
advertisements; they also observe that advance booking with
cancellations is a common practice in the airline industry, 
where limited inventory is oversold and then, if necessary,
passengers are ``bumped'' from flights and compensated with a penalty
payment, often in the form of credit for future flights.

Different buyback problems are distinguished from each other
by the constraints that express which sets of bids can be
simultaneously accepted.  In the simplest case, the only
constraint is a fixed upper bound on the total number of
accepted bids.  Alternatively, there may be a bipartite 
graph whose two vertex sets are called \emph{bids} and
\emph{slots}, and a set of bids may be simultaneously
accepted if and only if each bid in the set can be 
matched to a different slot using edges
of the bipartite graph.  Both of these examples are
special cases of the \emph{matroid buyback problem},
in which there is a matroid structure on the bids,
and a set of bids may be simultaneously accepted
if and only if they constitute an independent set 
in this matroid.  Other types of constraints (e.g.~knapsack
constraints) have also been studied in the context of buyback
problems~\cite{BHK-ec09}, but the matroid buyback problem
has received the most study.  This is partly because of its desirable
theoretical properties --- the offline version of the problem
is computationally tractable, and the online version admits
an online algorithm whose payoff is identical to that of the
omniscient seller when the buyback penalty is zero --- and
partly because of its well-motivated special cases, such as
the problem of matching bids to slots described above.

As is customary in the analysis of online algorithms, we 
evaluate algorithms according to their competitive
ratio: the worst-case upper bound on the ratio between
the algorithm's (expected) payoff and that of an informed 
seller who knows the entire bid sequence and always allocates
to an optimal feasible subset without paying any penalties.
The problem of deterministic matroid buyback algorithms has
been completely solved: a simple algorithm was proposed and
analyzed by Constantin et al.~\cite{CFMP-workshop,CFMP-soda09} 
and, independently, Babaioff et al.~\cite{BHK-workshop}, and
it was recently shown~\cite{BHK-ec09} that the competitive 
ratio of this algorithm is optimal for deterministic matroid 
buyback algorithms, even for the case of rank-one matroids 
(i.e., selling a single indivisible good).  However, this
competitive ratio can be strictly improved by using a 
randomized algorithm against an oblivious adversary.  
Babaioff et al.~\cite{BHK-ec09} showed that this result
holds when the buyback penalty factor is sufficiently small, 
and they left open the question of determining the optimal
competitive ratio of randomized algorithms --- or even 
whether randomized algorithms can improve on the competitive
ratio of the optimal deterministic algorithm when the buyback
factor is large.  

Our work resolves this open question by supplying a randomized
algorithm whose competitive ratio (against an oblivious adversary)
is optimal for all values of the buyback penalty factor.  We 
present the algorithm and the upper bound on its competitive 
ratio in Section~\ref{s:algo} and the matching lower bound
in Section~\ref{s:lowerbound}.  Our algorithm is also much
simpler than the randomized algorithm of~\cite{BHK-ec09},
avoiding the use of stationary renewal processes.  It may be
viewed as an online randomized reduction that transforms an
arbitrary instance of the matroid buyback problem into a 
specially structured instance on which deterministic algorithms
are guaranteed to perform well.  Our matching lower bound
relies on defining and analyzing a suitable continuous-time 
analogue of the single-item buyback problem.

\paragraph{Adaptive adversaries.}  In this paper we analyze 
randomized algorithms with an oblivious adversary.  If the
adversary is adaptive\footnote{A distinction between 
\emph{adaptive offline} and \emph{adaptive online} adversaries
is made in~\cite{RequestAnswer,BorodinElYaniv}.  When we refer
to an adaptive adversary in this paper, we mean an adaptive
offline adversary.}, then no randomized algorithm can 
achieve a better competitive ratio than that achieved by
the optimal deterministic algorithm.  This fact is a direct
consequence of a more general theorem asserting the same
equivalence for the class of \emph{request answer games}
(Theorem 2.1 of~\cite{RequestAnswer} or 
Theorem 7.3 of~\cite{BorodinElYaniv}), a class of 
online problems that includes the buyback problem.\footnote{The
definition of request answer games in~\cite{BorodinElYaniv} 
requires that the game must have a minimization objective,
whereas ours has a maximization objective.  However, the
proof of Theorem 7.3 in~\cite{BorodinElYaniv} goes through,
with only trivial modifications, for request answer games
with a maximization objective.}

\paragraph{Strategic considerations.}  In keeping 
with~\cite{BHK-workshop,BHK-ec09}, we treat the
buyback problem as a pure online optimization
with non-strategic bidders.  For an examination
of strategic aspects of the buyback problem, we 
refer the reader to~\cite{CFMP-soda09}.

% Since our algorithm
% has the property that the probability of bidder $i$
% receiving an allocation, conditional on the other bids,
% is a nondecreasing function of $i$'s bid\footnote{It is worth noting, however,
% that unlike previous randomized algorithms for this problem,
% the \emph{conditional probability} of receiving an
% allocation
% the conditional probability},  it can be combined with 
% a payment rule such that no bidder has an incentive to
% bid below her true value.  It may still be the case that
% bidders have an incentive to bid above their true value
% in order to collect a large buyback penalty, but this
% cannot reduce the mechanism's revenue relative to the
% revenue obtained from truthful bidders.

\paragraph{Related work.}  The buyback model was first investigated by 
Constantin et al.~\cite{CFMP-workshop,CFMP-soda09}
and Babaioff et al.~\cite{BHK-workshop,BHK-ec09}.
The optimal deterministic algorithm for the matroid
buyback problem was presented 
in~\cite{BHK-workshop,CFMP-workshop,CFMP-soda09}
and a proof of its optimality appeared
in~\cite{BHK-workshop,BHK-ec09}.
Constantin et al.~also investigated strategic aspects 
of the matroid buyback problem in~\cite{CFMP-workshop,CFMP-soda09};
this research was featured in a recent survey of theory research
at Google in ACM SIGACT News~\cite{TRAG}.
Babaioff et al.~presented algorithms for the knapsack buyback 
problem~\cite{BHK-workshop,BHK-ec09} and designed a 
randomized algorithm for the matroid buyback problem
that strictly improves the competitive ratio of the 
optimal deterministic algorithm when the adversary 
is oblivious and the buyback penalty factor is sufficiently 
small~\cite{BHK-ec09}.  

Prior to the aforementioned work on the buyback problem,
several earlier papers considered models in which allocations,
or other commitments, could be cancelled at a cost.  
Biyalogorsky et al.~\cite{BCFG99} studied such ``opportunistic
cancellations'' in the setting of a seller allocating
$N$ units of a good in a two-period model, demonstrating 
that opportunistic cancellations could improve allocative
efficiency as well as the seller's revenue.  Sandholm
and Lesser~\cite{SandholmL01} analyzed a more general 
model of ``leveled commitment contracts'' and proved
that leveled commitment never decreases the expected payoff
to either contract party.  However, to the best of our knowledge,
the buyback problem studied in this paper and its direct
precursors~\cite{BHK-workshop,BHK-ec09,CFMP-workshop,CFMP-soda09}
is the first to analyze commitments with cancellation costs
in the framework of worst-case competitive analysis rather
than average-case Bayesian analysis.

\section{Preliminaries}\label{s:preliminaries}
First we define the problem in the setting of single item and then 
generalize the definition in the case of matroids. 
\subsection{Single Item Case}
The seller has a single item to allocate. The bids $v_1,v_2,\ldots,v_n$
come in a sequence and when bid $v_i$ arrives the seller must either commit 
or reject the bid immediately. When the seller commits, the previous 
commitment must be revoked by paying a penalty of $f\cdot v_j$, where 
$v_j$ is the bid being revoked and $f \geq 0$ is a fixed number called 
the \emph{buyback factor}.  This implies that at the end of processing 
the bid sequence, the seller's payoff is equal to the final committed 
bid minus $f$ times the sum of all revoked bids. 
The customer with the final accepted bid gets the item.

\subsection{General model for matroids}
Consider a matroid\footnote{See~\cite{oxley} for the definition 
of a matroid.} $(\groundset,\indsets)$ where $\groundset$ is 
the ground set and $\indsets$ is the set of independent subsets 
of $\groundset$. 
We describe the problem abstractly and then relate it to the single item 
case.  We will assume that the ground set $\groundset$ is identified
with the set $\{1,\ldots,n\}$.  There is a bid value $v_i \geq 0$ associated
to each element $i \in \groundset$.  The information available to the 
algorithm at time $k \; (1 \leq k \leq n)$ consists of the first $k$
elements of the bid sequence --- i.e. the subsequence $v_1,v_2,\ldots,v_k$ ---
and the restriction of the matroid structure to the first $k$
elements.  (In other words, for every subset $S \subseteq \{1,2,\ldots,k\}$,
the algorithm knows at time $k$ whether $S \in \indsets.$)

At any step the algorithm can choose a subset 
$S^{k}\subseteq S^{k-1}\cup \{k\}$.  This set $S^k$ must be an independent 
set, i.e $S^k\in \mathcal{I}$.  Hence the final set held by the algorithm 
is $R=S^n$.  The algorithm must perform a buyback for every element of
$B= \left( \cup^n_{i=1}S^i \right) \backslash S^n$. 
For any set $S\subseteq \groundset$ let $\val(S)=\sum_{i\in S}v_i$. 
Finally we define the payoff of the algorithm as $\val(R)-f\cdot \val(B)$. 
This definition generalizes the single item case, which corresponds
to the case in which $\indsets$ consists of all one-element
subsets of $\groundset$.

\section{Randomized algorithm against oblivious adversary} \label{s:algo}

In this section we give a randomized algorithm with competitive ratio 
$- W \left( \frac{-1}{e  (1+f)} \right)$ 
against an oblivious 
adversary.  Here $W$ is Lambert's $W$ function\footnote{Lambert's $W$ function is multivalued for our
domain. We restrict to the case where $W \left( \frac{-1}{e  (1+f)} \right)\leq-1$.}, defined as the inverse 
of the function $z \mapsto z e^z$. The design of our randomized 
algorithm is based on two insights: 
\begin{enumerate}
\item  Although the standard greedy online algorithm for picking 
a maximum-weight basis of a matroid can perform arbitrarily poorly
on a worst-case instance of the buyback problem, it performs well
when the ratios between values of different matroid elements are 
powers of some scalar $r > 1+f$.  (We call such instances
``$r$-structured.'')
\item  There is a randomized reduction from arbitrary instances of 
the buyback problem to instances that are $r$-structured.
\end{enumerate}

\subsection{The greedy algorithm and $r$-structured instances}
\label{s:greedy}

\newcommand{\GMA}{{\mathsf{GMA}}}
\newcommand{\ALG}{{\mathsf{ALG}}}
\newcommand{\Filter}[1]{{\mathsf{Filter}(#1)}}
\newcommand{\RA}{{\mathsf{RandAlg}}}

\begin{definition}  Let $r>1$ be a constant.
An instance of the matroid buyback problem is
\emph{$r$-structured} if for every pair of 
elements $i,j,$ the ratio $v_i/v_j$ is equal 
to $r^l$ for some $l \in \mathbb{Z}.$
\end{definition}

\begin{algorithm}[h]
\caption{Greedy Matroid Algorithm ($\GMA$)}
\label{alg:matroid}
\begin{algorithmic}[1]
\STATE Initialize $S = \emptyset.$
\FORALL{elements $i$, in order of arrival,}
\IF{$S\cup \{i\}\in \indsets$} \label{algstep:indep-test}
\STATE Sell to $i$. \label{step:greedy-a}
\ELSE
\STATE Let $j$ be an element of
smallest value such that  $S\cup \{i\}\setminus \{j\}\in \indsets$.
\label{algstep:eprime}
\IF{$v_j < v_i$}
\STATE Sell to $i$ and buy back $j$. \label{step:greedy-b}
\ENDIF
\ENDIF
\ENDFOR
\end{algorithmic}
\end{algorithm}

\begin{lemma} \label{lem:greedy}
For $r > 1+f$, when the greedy matroid algorithm is executed 
on an $r$-structured instance of the matroid buyback problem, 
its competitive ratio is at most $\frac{r-1}{r-1-f}.$
\end{lemma}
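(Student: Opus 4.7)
I will establish two bounds that together give the lemma: (i) the final independent set $R = S^n$ satisfies $\val(R) = \OPT$, i.e.\ $R$ is a maximum-weight basis of the matroid; and (ii) on an $r$-structured instance, the aggregate buyback value obeys $\val(B) \leq \val(R)/(r-1)$. Granting both, the payoff of GMA is
\[
\val(R) - f\cdot\val(B) \;\geq\; \val(R)\!\left(1 - \frac{f}{r-1}\right) \;=\; \frac{r-1-f}{r-1}\cdot\OPT,
\]
giving the claimed competitive ratio (with $r > 1+f$ ensuring positivity).

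For (i), I would argue by induction on $k$ that $S^k$ is a max-weight basis of the matroid restricted to $\{1,\ldots,k\}$. The case $S^{k-1}\cup\{k\}\in\indsets$ is immediate from the inductive hypothesis. In the dependent case, let $C$ be the unique circuit through $k$ in $S^{k-1}\cup\{k\}$; for any basis $T$ of the restricted matroid containing $k$, the base-exchange property yields an element $f \in C \setminus \{k\}$ with $(T\setminus\{k\})\cup\{f\}$ a basis of the matroid restricted to $\{1,\ldots,k-1\}$. Combined with the inductive hypothesis and the algorithm's choice of $j^*$ as the minimum-value element of $C\setminus\{k\}$, this forces $\val(T) \leq \val(S^k)$ in both the swap and no-swap sub-cases.

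For (ii), define a \emph{successor map} $\sigma: B \to \groundset$ by $\sigma(j) = i$, where $i$ is the unique element whose arrival bought $j$ back on line~\ref{step:greedy-b}. The map $\sigma$ is injective because each arriving element displaces at most one existing element. Its iterates strictly increase value, so starting from any $j \in B$ we eventually reach some element of $R$; dually, for each $e \in R$ the chain $e, \sigma^{-1}(e), \sigma^{-2}(e), \ldots$ (terminated when no preimage exists) enumerates the bought-back elements charged to $e$, and these chains partition $B$. The critical $r$-structured fact is $v_{\sigma(j)} \geq r\cdot v_j$: the algorithm swaps only when $v_i > v_j$ strictly, and any strict inequality between two values whose ratio is an integer power of $r$ must be by a factor of at least $r$. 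Hence the chain rooted at $e$ contributes at most $v_e/r + v_e/r^2 + \cdots = v_e/(r-1)$ to $\val(B)$, and summing over $e \in R$ yields $\val(B) \leq \val(R)/(r-1)$.

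The main obstacle is (i): confirming that GMA produces a max-weight basis requires a careful application of base exchange in the no-swap sub-case to guarantee $v_f \geq v_k$ for the exchange witness $f \in C\setminus\{k\}$, using the algorithm's refusal to swap as evidence that $\min_{j \in C\setminus\{k\}} v_j \geq v_k$. Once (i) is established, (ii) reduces to clean geometric-series bookkeeping and the final inequality is arithmetic.
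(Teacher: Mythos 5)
Your proof is correct and follows essentially the same route as the paper: your successor-map chains rooted at the elements of $R$ are exactly the paper's recursively defined sets $B(i)$, and the per-chain geometric-series bound $v_e/(r-1)$ (using that an $r$-structured swap must improve by a factor of at least $r$) combined with $\val(R)=\OPT$ yields the identical payoff bound $\left(1-\frac{f}{r-1}\right)\OPT$. The only difference is that the paper simply cites the fact that the greedy swap algorithm ends with a maximum-weight basis as well known, whereas you sketch an exchange-argument proof of it.
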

\begin{proof}
As is well known, at termination the set $S$ selected by $\GMA$
is a maximum-weight basis of the matroid. 
To give an upper bound on the total buyback payment, we 
define a set $B(i)$ for each $i \in \groundset$ recursively
as follows: if $\GMA$ never sold to $i$, or sold to $i$
in step~\ref{step:greedy-a}, then $B(i) = \emptyset.$  If $\GMA$
sold to $i$ in step~\ref{step:greedy-b} while 
buying back $j$, then $B(i) = \{j\} \cup B(j).$  By induction
on the cardinality of $B(i),$ we find that the set 
$\{v_x/v_i \,|\, x \in B(i)\}$ consists of distinct 
negative powers of $r$, so 
\[
\sum_{x \in B(i)} v_x \leq v_i \cdot \sum_{i=1}^{\infty} r^{-i} 
= \frac{v_i}{r-1}.
\]
By induction on the number of iterations of the
main loop, the set $\bigcup_{i \in S} B(i)$ consists of all the
elements ever bought back by $\GMA$; consequently, the
total buyback payment is bounded by
\[
f \cdot \sum_{i \in S} \sum_{x \in B(i)} v_x \leq \frac{f}{r-1} \sum_{i \in S} v_i.
\]
Thus, the algorithm's net payoff is at least $1 - \frac{f}{r-1}$ times 
the value of the maximum weight basis.
\end{proof}

\subsection{The random filtering reduction} \label{s:reduction}

Consider two instances $\mathbf{v},\mathbf{w}$
of the matroid buyback problem, consisting of
the same matroid $(\groundset,\indsets)$, with
its elements presented in the same order, but
with different values: element $i$ has values
$v_i,\, w_i$ in instances $\mathbf{v},\mathbf{w}$,
respectively.  Assume furthermore that $v_i \geq w_i$
for all $i$, and that both values $v_i,w_i$ are revealed
to the algorithm at the time element $i$ arrives.  
Given a (deterministic or randomized) algorithm $\ALG$ which
achieves expected payoff $P$ on instance $\mathbf{w}$,
we present here an algorithm $\Filter{\ALG}$
which achieves expected payoff $P$ on instance
$\mathbf{v}.$

\begin{algorithm}[h]
\caption{Random Filtering Algorithm $\Filter{\ALG}$}
\label{alg:matroid}
\begin{algorithmic}[1]
\STATE Initialize $S = \emptyset.$
\FORALL{elements $i$, in order of arrival,}
\STATE Observe $v_i,w_i.$
\STATE Randomly sample $x_i = 1$ with probability $w_i/v_i$, else $x_i=0$.
\STATE Present element $i$ with value $w_i$ to $\ALG.$
\IF{$\ALG$ sells to $i$ {\bf and} $x_i=1$} 
\STATE Sell to $i$.
\ENDIF
\IF{$\ALG$ buys back an element $j$ {\bf and} $x_j=1$}
\STATE Buy back $j$.
\ENDIF
\ENDFOR
\end{algorithmic}
\end{algorithm}

\begin{lemma}  The expected payoff of $\Filter{\ALG}$ on instance
$\mathbf{v}$ equals the expected payoff of $\ALG$ on instance $\mathbf{w}$.
\label{lem:filter}
\end{lemma}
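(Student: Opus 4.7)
The plan is to show that, element by element, the expected revenue contribution and the expected buyback cost of $\Filter{\ALG}$ on instance $\vc{v}$ match those of $\ALG$ on instance $\vc{w}$, and then invoke linearity of expectation. The key observation is that each Bernoulli $x_i$ is drawn based only on the public quantities $v_i, w_i$, and in particular is statistically independent of the random bits used by $\ALG$ (which sees only the values $w_i$, not $x_i$). Consequently the events ``$\ALG$ sells to $i$'' and ``$x_i = 1$'' are independent, and likewise for ``$\ALG$ buys back $j$'' and ``$x_j = 1$''.

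Before comparing payoffs I would verify feasibility: the set held by $\Filter{\ALG}$ at every step is a subset of the set held by $\ALG$, because $\Filter{\ALG}$ only acts (either accepts or buys back) when $\ALG$ does and an additional filter $x_i=1$ fires. Since $\ALG$ on $\vc{w}$ respects the matroid constraints and matroid independence is closed under taking subsets, $\Filter{\ALG}$'s held set is always independent, so it is a valid algorithm on instance $\vc{v}$.

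For the payoff comparison, let $A_i$ denote the event that element $i$ belongs to $\ALG$'s final held set, and $B_j$ the event that $\ALG$ buys back $j$ at some point. Then $i$ is in $\Filter{\ALG}$'s final set iff $A_i \cap \{x_i = 1\}$, and $\Filter{\ALG}$ buys back $j$ iff $B_j \cap \{x_j = 1\}$. Using independence of $x_i$ from $\ALG$'s randomness together with $\Pr[x_i = 1] = w_i/v_i$, linearity of expectation gives
\[
\expect[\val_{\vc{v}}(R_{\Filter{\ALG}})] = \sum_i v_i \cdot \Pr[A_i] \cdot \frac{w_i}{v_i} = \sum_i w_i \cdot \Pr[A_i] = \expect[\val_{\vc{w}}(R_{\ALG})],
\]
and an identical computation yields $\expect[\val_{\vc{v}}(B_{\Filter{\ALG}})] = \expect[\val_{\vc{w}}(B_{\ALG})]$. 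Subtracting $f$ times the second equality from the first gives the desired equality of expected payoffs.

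The only subtle point, and the place I would be most careful, is the independence claim: because $\ALG$ is presented only with the $w_i$ values and never observes the $x_i$'s, the joint distribution factors as $\Pr[A_i, x_i=1] = \Pr[A_i]\cdot (w_i/v_i)$ even when $\ALG$ is randomized and adaptive to the arrival sequence. Once that is cleanly stated, the rest of the argument is a one-line application of linearity of expectation.
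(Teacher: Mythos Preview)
Your proof is correct and follows essentially the same approach as the paper: define indicator variables for $\ALG$'s sell and buyback events, observe that the corresponding events for $\Filter{\ALG}$ are obtained by intersecting with $\{x_i=1\}$, invoke independence of $x_i$ from $\ALG$'s decisions, and finish with linearity of expectation. The only differences are cosmetic (you track membership in the final set $A_i$ rather than the sell indicator $\sigma_i$, which amounts to the same payoff decomposition) and that you additionally verify feasibility of $\Filter{\ALG}$'s held set via the subset-closure property of matroids, a point the paper leaves implicit.
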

\begin{proof}
For each element $i \in \groundset,$ let $\sigma_i = 1$ if $\ALG$ 
sells to $i$, and let $\beta_i=1$ if $\ALG$ buys back $i$.  Similarly,
let $\sigma'_i = 1$ if $\Filter{\ALG}$ sells to $i$, and let 
$\beta'_i=1$ if $\Filter{\ALG}$ buys back $i$.  Observe that
$\sigma'_i = \sigma_i x_i$ and $\beta'_i = \beta_i x_i$ for all
$i \in \groundset,$ and that the random variable $x_i$ is 
independent of $(\sigma_i,\beta_i).$  Thus,
\begin{align*}
\expect \left[ \sum_{i \in \groundset} \sigma'_i v_i 
- (1+f)\beta'_i v_i \right] 
&=
\expect \left[ \sum_{i \in \groundset} \sigma_i x_i v_i 
- (1+f) \beta_i x_i v_i \right] \\
&= 
\sum_{i \in \groundset} \expect[\sigma_i - (1+f) \beta_i] \expect[x_i v_i] 
\\ &=
\sum_{i \in \groundset} \expect[\sigma_i - (1+f) \beta_i] w_i 
\\ &=
\expect \left[ \sum_{i \in \groundset} \sigma_i w_i - (1+f) \beta_i w_i \right].
\end{align*}
The left side is the expected payoff of $\Filter{\ALG}$ on instance
$\mathbf{v}$ while the right side is the expected payoff of $\ALG$ 
on instance $\mathbf{w}$.
\end{proof}

\subsection{A randomized algorithm with optimal competitive ratio}
\label{s:optimal}

In this section we put the pieces together, to obtain a randomized
algorithm with competitive ratio $- W \left(\frac{-1}{e (1+f)} \right)$
against oblivious adversary\footnote{Note that the algorithm is written in an offline
manner just for convenience and can be implemented as an online algorithm}.

\begin{algorithm}[h]
\caption{Randomized Algorithm $\RA(r)$}
\label{alg:matroid}
\begin{algorithmic}[1]
\STATE {\bf Given:} a parameter $r > 1+f$.
\STATE Sample $u \in [0,1]$ uniformly at random.
\FORALL{elements $i$}
\STATE Let $z_i = u + \lfloor \ln_r(v_i) - u \rfloor.$
\STATE Let $w_i = r^{z_i}.$
\ENDFOR
\STATE Run $\Filter{\GMA}$ on instances $\mathbf{v,w}$.
\end{algorithmic}
\end{algorithm}

\begin{lemma} \label{lem:rounding}
For all $i \in \groundset$, we have $v_i \geq w_i$ and
$\expect[w_i] = \frac{r-1}{r \ln(r)} v_i.$
\end{lemma}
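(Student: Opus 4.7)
The plan is to show that the quantity $z_i - \log_r(v_i)$ is distributed uniformly on $[-1, 0]$, from which both claims will follow immediately by a one-line integral.

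First I would introduce the shorthand $L_i = \log_r(v_i)$ so that $z_i = u + \lfloor L_i - u \rfloor$, and observe the algebraic identity
\[
z_i - L_i \;=\; u + \lfloor L_i - u \rfloor - L_i \;=\; -\bigl( (L_i - u) - \lfloor L_i - u \rfloor \bigr) \;=\; -\{L_i - u\},
\]
where $\{\cdot\}$ denotes fractional part. Since $u$ ranges uniformly over $[0,1]$, the quantity $L_i - u$ ranges uniformly over the length-one interval $[L_i - 1, L_i]$; a length-one interval maps onto $[0,1)$ bijectively (up to measure zero) under $\{\cdot\}$, so $\{L_i - u\}$ is uniformly distributed on $[0,1)$. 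Consequently $T_i := z_i - L_i$ is uniformly distributed on $[-1, 0]$.

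From this the first claim is immediate: because $T_i \leq 0$ and $r > 1$, we have $w_i = v_i \cdot r^{T_i} \leq v_i$. For the second claim, compute
\[
\expect[w_i] \;=\; v_i \cdot \expect\bigl[r^{T_i}\bigr] \;=\; v_i \int_{-1}^{0} r^{t} \, dt \;=\; v_i \cdot \frac{1 - r^{-1}}{\ln r} \;=\; \frac{r-1}{r \ln r}\, v_i.
\]

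There is essentially no obstacle: the only subtle point is to recognize that subtracting a uniform random variable from a constant and taking the fractional part produces a uniform distribution on $[0,1)$, which is why the algorithm uses the ``randomized shift'' $u$ inside the floor rather than a deterministic rounding. Everything else is a direct integral.
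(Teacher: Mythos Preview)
Your proof is correct and follows essentially the same approach as the paper: both recognize that $\log_r(v_i) - z_i$ equals the fractional part of $\log_r(v_i) - u$, hence is uniform on $[0,1)$, and then integrate $r^{-t}$ over a unit interval to obtain $\frac{r-1}{r\ln r}$. The only difference is notational (you work with $T_i = z_i - L_i$ on $[-1,0]$, the paper with $L_i - z_i$ on $[0,1]$).
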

\begin{proof}
The random variable $\ln_r(v_i) - z_i$ is equal to the 
fractional part of the number $\ln_r(v_i) - u,$ which
is uniformly distributed in $[0,1]$ since $u$ is 
uniformly distributed in $[0,1].$  It follows that 
$w_i / v_i$ has the same distribution as $r^{-u}$, which
proves that $v_i \geq w_i$ and also that
$$
\expect \left[ \frac{w_i}{v_i} \right] = \int_0^1 r^{-u} \, du =
\left. - \frac{1}{\ln(r)} \cdot r^{-u} \right|_0^1 =
\frac{r-1}{r \ln(r)}.
$$
\end{proof}

\begin{theorem} \label{thm:rand-alg}
The competitive ratio of $\RA$ against an oblivious adversary
is $\frac{r \ln(r)}{r-1-f}.$
\end{theorem}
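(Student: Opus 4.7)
The plan is to chain together the three lemmas already established. First I would verify that the instance $\mathbf{w}$ constructed inside $\RA(r)$ is indeed $r$-structured: since $w_i = r^{z_i}$ with $z_i = u + \lfloor \ln_r(v_i) - u \rfloor$, the ratio $w_i/w_j = r^{z_i - z_j}$ has an integer exponent. Also, Lemma~\ref{lem:rounding} gives $v_i \geq w_i$, so the pair $(\mathbf{v},\mathbf{w})$ satisfies the hypotheses of the random filtering reduction.

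Next I would set up the comparison to the offline optimum. Let $B^*$ denote a maximum-value independent set for $\mathbf{v}$, so that $\OPT = \sum_{i \in B^*} v_i$, and let $B_{\mathbf{w}}$ denote a maximum-weight basis for $\mathbf{w}$. Applying Lemma~\ref{lem:greedy} to $\mathbf{w}$ (which is $r$-structured with $r > 1+f$) shows that $\GMA$ achieves payoff at least $\frac{r-1-f}{r-1}\sum_{i \in B_{\mathbf{w}}} w_i$ on $\mathbf{w}$. Lemma~\ref{lem:filter} then says the expected payoff of $\Filter{\GMA}$ on $\mathbf{v}$ equals the expected payoff of $\GMA$ on $\mathbf{w}$, hence is at least $\frac{r-1-f}{r-1}\,\expect\bigl[\sum_{i \in B_{\mathbf{w}}} w_i\bigr]$.

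The remaining step is to lower-bound $\expect\bigl[\sum_{i \in B_{\mathbf{w}}} w_i\bigr]$ in terms of $\OPT$. Since $B_{\mathbf{w}}$ is the maximum-weight basis for $\mathbf{w}$, and $B^*$ is independent, we have $\sum_{i \in B_{\mathbf{w}}} w_i \geq \sum_{i \in B^*} w_i$ for every realization of $u$. Taking expectations and using the second part of Lemma~\ref{lem:rounding},
\begin{equation*}
\expect\Bigl[\sum_{i \in B_{\mathbf{w}}} w_i\Bigr] \;\geq\; \sum_{i \in B^*} \expect[w_i] \;=\; \frac{r-1}{r \ln(r)} \sum_{i \in B^*} v_i \;=\; \frac{r-1}{r \ln(r)}\,\OPT.
\end{equation*}
Combining everything, the expected payoff of $\RA(r)$ is at least $\frac{r-1-f}{r-1} \cdot \frac{r-1}{r \ln(r)} \,\OPT = \frac{r-1-f}{r \ln(r)}\,\OPT$, which yields competitive ratio $\frac{r \ln(r)}{r-1-f}$.

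I do not anticipate a major obstacle; the three lemmas were designed precisely to be assembled this way. The only subtlety worth stating carefully is that the inequality $\sum_{i \in B_{\mathbf{w}}} w_i \geq \sum_{i \in B^*} w_i$ holds pointwise in $u$ (since $B^*$ remains independent regardless of the weights we put on it), which is what justifies bringing the expectation inside and using linearity together with $\expect[w_i] = \tfrac{r-1}{r\ln r}\,v_i$.
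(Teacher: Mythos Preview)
Your proposal is correct and follows essentially the same chain of lemmas as the paper. The one minor difference is in how you relate the $\mathbf{w}$-optimum to the $\mathbf{v}$-optimum: the paper observes that the rounding map $v_i \mapsto w_i$ is monotone, so the $\mathbf{v}$-optimal basis $B^*$ is already a $\mathbf{w}$-optimal basis (i.e.\ $B_{\mathbf{w}} = B^*$), whereas you instead use the weaker, more direct fact that $B^*$ is independent and hence $\sum_{i\in B_{\mathbf{w}}} w_i \ge \sum_{i\in B^*} w_i$ pointwise in $u$; both routes are valid and yield the same bound.
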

\begin{proof}
Let $S^* \subseteq \groundset$ denote the maximum-weight basis
of $(\groundset,\indsets)$ with respect to the weights 
$\mathbf{v}$.  Since the mapping from $v_i$ to $w_i$ is
monotonic (i.e., $v_i \geq v_j$ implies $w_i \geq w_j$),
we know that $S^*$ is also a maximum-weight basis of $(\groundset,\indsets)$
with respect to the weights $\mathbf{w}$\footnote{There may be other maximum-weight basis of $\mathbf{w}$
which were not maximum-weight basis of $\mathbf{v}$.}.  Let 
$v(S^*) = \sum_{i \in S^*} v_i$ and let $w(S^*) = \sum_{i \in S^*} w_i.$

The input instance $\mathbf{w}$ is $r$-structured, so the
payoff of $\GMA$ on instance $\mathbf{w}$ is at least 
$\frac{r-1-f}{r-1} w(S^*)$.  
The modified weights $w_i$ satisfy two properties that
allow application of algorithm $\Filter{\ALG}$: the
value of $w_i$ can be computed online when $v_i$ is revealed
at the arrival time of element $i$, and it satisfies 
$w_i \leq v_i.$  By Lemma~\ref{lem:filter}, the expected
payoff of $\Filter{\GMA}$ on instance $\mathbf{v}$,
conditional on the values $\{w_i : i \in \groundset\}$, is
at least $\left(\frac{r-1-f}{r-1}\right) \cdot w(S^*)$.  Finally, by 
Lemma~\ref{lem:rounding} and linearity of expectation,
$ \expect \left[ w(S^*) \right] \geq
\left( \frac{r-1}{r \ln(r)} \right) \cdot v(S^*).$
The theorem follows by combining these bounds.
\end{proof}

The function $f(r) = \frac{r \ln(r)}{r-1-f}$ on the interval
$r \in (1+f,\infty)$ is minimized when 
$ - \frac{r}{1+f} = W \left( \frac{-1}{e (1+f)} \right)$ 
and $ f(r) = - W \left(\frac{-1}{e (1+f)} \right)$.
This completes our analysis of the randomized algorithm
$\RA(r)$.

\section{Lower Bound}\label{s:lowerbound}

We prove the lower bound on the competitive ratio of randomized
algorithms for online algorithms with buyback against an oblivious
adversary. The proof is by first reducing to a continuous version of
the problem and then applying Yao's Principle~\cite{YaoPrinciple}. 
As noted in the introduction, the lower bound in the case of an
adaptive adversary matches the lower bound for deterministic 
algorithms.  Both of these lower bounds are for the single item 
case and hence are also applicable for the general matroid case.

\subsection{Reduction to continuous version}
Consider the following continuous version of the problem for the single 
item case. Time starts at $t=1$ and stops at some time $t=x$. The value 
of $x$ is not known to the algorithm. The algorithm at any instant in 
time can make a mark. The final payoff of the algorithm is equal to the 
time at which it made its final mark minus $f$ times the sum of times of
marks before the final mark.  There is a clear relationship between
this problem and the single item buyback problem.  In particular, we 
can transform any algorithm for the single item buyback problem with 
competitive ratio $c$ to an algorithm for the continuous case with 
competitive ratio $c\times (1+\epsilon)$ for arbitrarily small $\epsilon > 0$. 
This transformation works by running the single item buyback algorithm
on the input sequence $1,1+\delta,(1+\delta)^2,(1+\delta)^3,\ldots$
for sufficiently small $\delta>0$, and making marks at the times 
$t$ corresponding to the values accepted in the execution of the single 
item buyback algorithm.

\subsection{Lower bound against oblivious adversaries}

\begin{theorem}
Any randomized algorithm for the continuous version of the single item buyback problem has competitive ratio at least $- W \left( \frac{-1}{e  (1+f)} \right)$.
\end{theorem}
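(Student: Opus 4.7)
The plan is to apply Yao's principle with a scale-invariant distribution on the stopping time. Specifically, I would let $L > 0$ be a large parameter and take $x \in [1, e^L]$ with density proportional to $1/x^2$; equivalently, $y = \ln x$ has the truncated exponential density $e^{-y}/(1 - e^{-L})$ on $[0,L]$. The motivation is that the continuous problem is invariant under multiplicative rescaling of $x$, so the hardest distribution should be log-scale invariant, and among such densities this choice is the one that produces the tightest Yao bound. I would send $L \to \infty$ at the end.

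For any deterministic algorithm with marks at times $1 \le t_1 < \cdots < t_k \le e^L$, set $s_i = \ln t_i$, $\Delta_i = s_{i+1} - s_i$ with the convention $s_{k+1} := L$, and $V_i = t_i - f\sum_{j<i} t_j$ (the net payoff if $t_i$ is the final mark). Using $V_i - V_{i-1} = e^{s_i} - (1+f) e^{s_{i-1}}$ and $V_1 e^{-s_1} = 1$, summation by parts applied to $(1 - e^{-L})\,\expect[\ALG] = \sum_i V_i\,(e^{-s_i} - e^{-s_{i+1}})$ produces the identity
\[
(1 - e^{-L})\,\expect[\ALG] \;=\; 1 + \sum_{i=1}^{k-1}\bigl[1 - (1+f) e^{-\Delta_i}\bigr] \;-\; V_k\, e^{-L}.
\]
Dropping the nonnegative last term (any sane algorithm has $V_k \ge 0$), I would maximize the sum over all valid gap sequences subject to $\sum_{i=1}^{k-1} \Delta_i \le L$. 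The map $\Delta \mapsto 1 - (1+f)e^{-\Delta}$ is concave, so by Jensen the sum is bounded above by $(k-1)\bigl[1 - (1+f) e^{-L/(k-1)}\bigr]$; writing $\Delta = L/(k-1)$, this becomes $L\cdot h(\Delta)$ where $h(\Delta) = (1 - (1+f) e^{-\Delta})/\Delta$. A calculus computation shows $h$ is maximized at $\Delta^\ast$ solving $e^{\Delta^\ast} = (1+f)(1 + \Delta^\ast)$, with $h(\Delta^\ast) = 1/(1+\Delta^\ast)$. The defining relation $-c/(1+f) = W(-1/(e(1+f)))$ rearranges (using $W(z) e^{W(z)} = z$) to $e^{c-1} = (1+f) c$, so $\Delta^\ast = c - 1$ and $\max h = 1/c$.

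Combining, $\expect[\ALG] \le (1 + L/c)/(1 - e^{-L})$, while $\expect[\OPT] = \expect[x] = L/(1 - e^{-L})$, giving $\expect[\OPT]/\expect[\ALG] \ge cL/(c + L)$, which tends to $c$ as $L \to \infty$. Yao's principle then yields the claimed lower bound on the competitive ratio. The main obstacle is arriving at the clean right-hand side above: one needs the correct summation-by-parts to produce an expression depending only on the gap sizes, together with the recognition that $\max_\Delta (1 - (1+f) e^{-\Delta})/\Delta$ is exactly the reciprocal of $-W(-1/(e(1+f)))$. Everything else is Jensen plus a one-variable optimization.
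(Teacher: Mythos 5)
Your argument is correct and lands on the right constant, but it handles the crucial quantifier ``every deterministic algorithm'' by a genuinely different route than the paper. Both proofs invoke Yao's principle with essentially the same hard distribution (density $1/x^2$ on $[1,y]$; the paper leaves it unnormalized and adds a point mass $1/y$ at $x=y$, you normalize and drop the atom --- this only shifts lower-order terms that vanish in the limit). The paper then proves a structural lemma (Lemma~\ref{geometric}): differentiating the expected payoff in each mark time, it argues the optimal deterministic algorithm marks at geometrically spaced times $1,w,\dots,w^{k-1}$, computes its payoff $1+(k-1)(w-1-f)/w$ against the prophet's $1+\ln y$, and quotes $\max_u (u-1-f)/(u\ln u)=-1/W\left(\frac{-1}{e(1+f)}\right)$. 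You bypass the structural lemma entirely: the summation-by-parts identity $(1-e^{-L})\cdot\expect[\mathsf{ALG}] = 1+\sum_{i=1}^{k-1}\bigl[1-(1+f)e^{-\Delta_i}\bigr]-V_k e^{-L}$ expresses an arbitrary algorithm's payoff through its log-gaps, and concavity of $\Delta\mapsto 1-(1+f)e^{-\Delta}$ plus Jensen collapses the problem to equal gaps (i.e.\ exactly the geometric configuration), after which a one-variable optimization gives $1/c$. Your route is arguably tighter as a piece of reasoning: the paper's lemma only checks first-order conditions (with the base case in an appendix), whereas Jensen bounds every algorithm at once; and your substitution $u=e^{\Delta}$ supplies the verification of $\max_\Delta(1-(1+f)e^{-\Delta})/\Delta=1/c$, which the paper asserts without proof. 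What the paper's route buys is an explicit description of the near-optimal adversarial play, which your argument does not need.

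Two small repairs. First, your ``defining relation'' should read $-c=W\left(\frac{-1}{e(1+f)}\right)$, not $-c/(1+f)=W\left(\frac{-1}{e(1+f)}\right)$; the latter is the relation satisfied by the optimal rounding base $r=(1+f)c$ in Section~\ref{s:algo}. The identity you actually use, $e^{c-1}=(1+f)c$, is the correct consequence of $-c=W\left(\frac{-1}{e(1+f)}\right)$, so nothing downstream changes. Second, dropping $-V_k e^{-L}$ needs one sentence: if $V_k<0$, truncate the mark sequence at the last index $j$ with $V_j\ge 0$ (it exists since $V_1=t_1\ge 1$); the truncated algorithm does at least as well for every stopping time, so it suffices to bound algorithms with $V_k\ge 0$, as you implicitly assume.
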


The proof is an application of Yao's Principle~\cite{YaoPrinciple}. We
give a one-parameter family of input distributions (parametrized by a
number $y>1$) for the continuous version and prove that
any deterministic algorithm for the continuous version of the problem
must have a competitive ratio which tends to $- W \left( \frac{-1}{e
    (1+f)} \right)$ as $y \rightarrow \infty$. 
It is easy to note that an input to the
continuous version is completely specified by the time $x$
at which the input stops, and hence the input
distribution is just a distribution on $x$. For a 
given $y>1$, let the probability density for the stopping times be
defined as follows.
\begin{eqnarray}
\textbf{f}(x)=&1/x^2 &\textrm{if }x< y \nonumber \\
\textbf{f}(x)=&0 &\textrm{if }x>y
\end{eqnarray}
Note that the above definition is not a valid probability density
function, so we place a point mass at $x=y$ of probability
$\frac{1}{y}$. Hence our distribution is a mixture of discrete and
continuous probability. For notational convenience let $d(F(x))=\textbf{f}(x)$
where $F$ is the cumulative distribution function. Also let
$G(x)=1-F(x)$. Any deterministic algorithm is defined by a set 
$T=\{u_1,u_2,\ldots,u_k\}$ of times at which it makes a mark(Given that it does not
stop before that time).

%\begin{comment}
%We now prove that the optimal deterministic algorithm
%for this input distribution must mark the points
%$\{1,w,w^2,\ldots,w^{k-1}\}$ for some $w,k$ and then prove that any
%such algorithm achieves a competitive ratio which tends to $- W \left(
%  \frac{-1}{e (1+f)} \right)$ as $y\rightarrow \infty$.
%\end{comment}

\begin{lemma} \label{geometric}
There exists an optimal deterministic algorithm for the distribution described by $T=\{1,w,w^2,\ldots,w^{k-1}\}$ for some w,k.
\end{lemma}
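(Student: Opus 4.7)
The plan is to derive a clean closed-form expression for the expected payoff of an arbitrary deterministic algorithm and then use AM--GM to show that only geometric sequences can be optimal. Throughout I restrict attention to mark sets $T=\{u_1<u_2<\cdots<u_k\}\subseteq[1,y]$: a mark at $u<1$ is always superseded (so deleting it only saves buyback cost), while a mark at $u>y$ is never executed.

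First I would compute $E(T)$ explicitly. Conditioning on the stopping time $x$, mark $u_j$ is the last mark placed iff $u_j\le x<u_{j+1}$ (with $u_{k+1}:=\infty$), an event of probability $1/u_j-1/u_{j+1}$ (and $1/u_k$ for $j=k$) under the given density. Writing $s_j=\sum_{i<j}u_i$, this gives
$$E(T)\;=\;\sum_{j=1}^{k-1}\!\left(\tfrac{1}{u_j}-\tfrac{1}{u_{j+1}}\right)\!(u_j-f s_j)\;+\;\tfrac{1}{u_k}(u_k-f s_k),$$
and a short Abel-summation argument should collapse this to the clean identity
$$E(T)\;=\;k\;-\;(1+f)\sum_{j=2}^{k}\frac{u_{j-1}}{u_j}.$$
Deriving this closed form is the main obstacle; once it is in hand, the rest is a routine optimization.

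Next, for fixed $k$, each ratio $u_{j-1}/u_j$ is increasing in $u_{j-1}$ and decreasing in $u_j$, so pushing $u_1$ down to $1$ and $u_k$ up to $y$ only decreases the sum, hence increases $E(T)$. With $u_1=1$ and $u_k=y$ fixed, the product $\prod_{j=2}^{k}(u_{j-1}/u_j)=1/y$ is constant, so AM--GM yields
$$\sum_{j=2}^{k}\frac{u_{j-1}}{u_j}\;\ge\;(k-1)\,y^{-1/(k-1)},$$
with equality iff all ratios coincide. Setting $w:=y^{1/(k-1)}$, this forces $u_j=w^{j-1}$, so the unique $k$-element maximizer of $E(T)$ is exactly the geometric set $\{1,w,w^2,\ldots,w^{k-1}\}$.

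Finally I would maximize over $k$: at the $k$-element optimum the payoff is $E^{\ast}_k=k-(1+f)(k-1)y^{-1/(k-1)}$, and expanding $y^{-1/(k-1)}=1-\ln(y)/(k-1)+O(1/(k-1)^2)$ shows $E^{\ast}_k=-fk+O(1)\to-\infty$ whenever $f>0$. Hence the supremum is attained at some finite $k^{\ast}$, which supplies the pair $(w,k^{\ast})$ demanded by the lemma.
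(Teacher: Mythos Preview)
Your argument is correct and reaches the same closed form the paper does,
\[
E(T)\;=\;k\;-\;(1+f)\sum_{j=2}^{k}\frac{u_{j-1}}{u_j},
\]
which in the paper appears as $P=\sum_i (u_i-(1+f)u_{i-1})G(u_i)$ with $G(u_i)=1/u_i$; your Abel summation is exactly the rearrangement the paper performs when passing from the double integral to Equation~(\ref{simplifiedyao}).

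Where the two proofs diverge is after this identity. The paper differentiates $P$ in each coordinate $u_i$, obtains the first-order condition $u_i^2=u_{i-1}u_{i+1}$, and then runs an induction on $i$ to conclude $u_i=u_{i+1}^{(i-1)/i}$ (with a separate monotonicity argument in the appendix for the base case $u_1=1$). You instead exploit the structure of the closed form directly: monotonicity in $u_1$ and $u_k$ forces the endpoints to $1$ and $y$, after which the product of the ratios is fixed at $1/y$ and AM--GM pins down the unique minimiser of the sum.

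Your route is a bit cleaner: it is calculus-free, it yields global optimality immediately (whereas the paper's first-order conditions locate critical points but do not, as written, rule out maxima on the boundary or verify second-order conditions), and it actually identifies the ratio $w=y^{1/(k-1)}$ for each $k$, which the paper leaves implicit. The paper's calculus approach, on the other hand, is more robust in the sense that the first-order condition $u_i^2=u_{i-1}u_{i+1}$ would survive under perturbations of the density for which the product-of-ratios constraint underlying your AM--GM step is no longer available. One small caveat: your final step showing $E^\ast_k\to-\infty$ uses $f>0$; this is harmless for the lower-bound application, but you should state the hypothesis.
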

\begin{proof}
Let $T=\{u_1,u_2,\ldots,u_k\}$. We prove that $u_i=u_{i+1}^{(i-1)/i}$ for
$i\in [k-1]$ by induction and it is easy to see that the claim 
follows from this. For lack of space we just prove the inductive
case. Please refer to the appendix for the base case. Let $u_0=0$
and $u_{k+1}=\infty$.

It is easy to see that the algorithm's expected payoff, $P$, is 
$\underset{i=1}{\overset{k}{\sum}}\int_{u_i}^{u_{i+1}} \! (u_i-f \cdot \underset{j=1}{\overset{i-1}{\sum}}u_j)\, d(F(y))$. We simplify this expression as follows.
\begin{eqnarray}
P=&\underset{i=1}{\overset{k}{\sum}}\int_{u_i}^{u_{i+1}} \! (u_i-f \cdot \underset{j=1}{\overset{i-1}{\sum}}u_j) \, d(F(y)) \nonumber \\
=&\underset{i=1}{\overset{k}{\sum}}\int_{u_i}^{\infty} \! (u_i-(1+f)\cdot u_{i-1}) \, d(F(y)) \nonumber \\
=&\underset{i=1}{\overset{k}{\sum}} (u_i-(1+f)\cdot u_{i-1})\cdot G(u_i) \label{simplifiedyao}
\end{eqnarray}
Now we rewrite this equation to express the right side as a function of $u_i$, using $\rho_i$ to denote the sum of all terms on the right side except for the $i,i+1$ terms.  Crucially, $\rho_i$ is independent of $u_i$.
\begin{eqnarray}
P=&(u_i-(1+f)\cdot u_{i-1})\cdot G(u_i)+(u_{i+1}-(1+f)\cdot u_i)\cdot G(u_{i+1})+ \rho_i \nonumber \\
=&(u_i-(1+f)\cdot u_{i-1})\cdot \frac{1}{u_i}+(u_{i+1}-(1+f)\cdot u_i)\cdot \frac{1}{u_{i+1}}+ \rho_i
\end{eqnarray}

If we differentiate $P$ with respect to $u_i$, equate to 0, and solve, then we obtain the equation $u_i^2=u_{i-1}\cdot u_{i+1}$. By induction we know that $u_{i-1}=u_i^{(i-2)/(i-1)}$. Substituting and solving we get the necessary equation. 
\end{proof}

\begin{lemma}
For any algorithm described by $T=\{1,w,w^2,\ldots,w^{k-1}\}$, the competitive ratio is bounded below by a number which tends to $- W \left( \frac{-1}{e  (1+f)} \right)$ as y tends to $\infty$.
\end{lemma}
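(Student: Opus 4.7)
The plan is to evaluate equation~\eqref{simplifiedyao} on the geometric schedule $u_i = w^{i-1}$, compute $\OPT = \expect[x]$ for the given distribution, let $y \to \infty$, and then optimize over $w$.  First I would compute $G(x) = 1 - F(x)$ explicitly: combining the density $\mathbf{f}(x) = 1/x^2$ on $[1,y)$ with the point mass $1/y$ at $x = y$ gives $G(x) = 1/x$ for $x \in [1,y]$ and $G(x) = 0$ for $x > y$.  Substituting $u_i = w^{i-1}$ (with $u_0 = 0$) into~\eqref{simplifiedyao}, the $i = 1$ summand equals $1$, and each $2 \leq i \leq k$ with $w^{i-1} \leq y$ contributes $(w^{i-1} - (1+f)\,w^{i-2})/w^{i-1} = 1 - (1+f)/w$.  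Assuming $w > 1+f$ (otherwise each extra summand is non-positive, so the algorithm's best response is to mark only at time $1$, yielding unbounded competitive ratio as $y\to\infty$), the algorithm maximizes $P$ by taking $k - 1 = \lfloor \log_w y \rfloor$, giving
\[
P \;=\; 1 + (k-1)\left(1 - \frac{1+f}{w}\right).
\]

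Next, a direct integration against the distribution yields $\OPT = \expect[x] = \int_1^y x^{-1}\,dx + 1 = \ln y + 1$.  Dividing and letting $y \to \infty$ with $w$ fixed, the dominant terms give
\[
\lim_{y \to \infty}\, \frac{\OPT}{P} \;=\; \frac{\ln w}{1 - (1+f)/w} \;=\; \frac{w \ln w}{w - 1 - f}.
\]

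Finally I would minimize $g(w) := w \ln w/(w - 1 - f)$ over $w > 1 + f$.  Solving $g'(w) = 0$ yields the identity $w = (1+f)(1 + \ln w)$, and substituting this back into $g$ gives the minimum value $g(w^*) = w^*/(1+f)$.  Rewriting the critical equation as $(-w^*/(1+f))\exp(-w^*/(1+f)) = -1/(e(1+f))$ identifies $-w^*/(1+f)$ as $W(-1/(e(1+f)))$ on the branch $W \leq -1$, so $g(w^*) = -W(-1/(e(1+f)))$, exactly the claimed bound (and consistent with the optimization performed at the end of Section~\ref{s:algo}).  The main subtlety I anticipate is the order of limits: the adversary fixes $y$ first and the algorithm then tunes $(w,k)$ to $y$, so I really need $\liminf_{y \to \infty} \inf_{w} \OPT/P \geq -W(-1/(e(1+f)))$.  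This follows because $g$ is continuous with a unique interior minimizer $w^* \in (1+f,\infty)$, the ratio $(\ln y + 1)/[1 + (\log_w y)(1 - (1+f)/w)]$ converges to $g(w)$ uniformly on any compact subset of $(1+f,\infty)$, and outside any such compact neighbourhood of $w^*$ the ratio grows without bound; hence the inner infimum over $w$ converges to $g(w^*)$ as $y \to \infty$.
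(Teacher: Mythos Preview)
Your proof is correct and follows essentially the same route as the paper: both compute $P = 1 + (k-1)\bigl(1 - (1+f)/w\bigr)$ and $V = 1 + \ln y$, then reduce to optimizing $g(w) = \dfrac{w\ln w}{w-1-f}$, identifying the optimum via Lambert's $W$.  The one genuine difference is in how the order-of-limits issue is handled.  You fix $w$, send $y\to\infty$, and then argue separately that the convergence is uniform enough for $\inf_w$ and $\lim_{y\to\infty}$ to commute.  The paper instead bypasses this entirely by using the crude inequality $\ln y \geq (k-1)\ln w$ (valid since marks past $y$ contribute nothing) to write, directly and uniformly in $(w,k)$,
\[
\frac{P}{V} \;<\; \frac{1}{\ln y} + \frac{(k-1)\bigl(1-(1+f)/w\bigr)}{(k-1)\ln w}
\;=\; \frac{1}{\ln y} + \frac{w-1-f}{w\ln w}
\;\leq\; \frac{1}{\ln y} + \max_u \frac{u-1-f}{u\ln u},
\]
giving a single $y$-dependent bound that already absorbs the optimization over $w$.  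Your approach is perfectly valid but requires the extra compactness/uniformity paragraph; the paper's trick is shorter but less transparent about where the Lambert-$W$ value actually comes from.
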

\begin{proof}
It is easy to see that the expected payoff, $V$, of a prophet who knows the stopping time $x$ is given by the following equation.
\begin{eqnarray}
V=\int_1^y \! \frac{1}{x^2}\cdot x \, dx+\frac{1}{y}\cdot y=1+\ln(y)
\end{eqnarray}
Now we compute the payoff for any algorithm described by $T=\{1,w,w^2,\ldots,w^{k-1}\}$.
\begin{eqnarray}
P=&1\cdot G(1)+\underset{i=1}{\overset{k-1}{\sum}}(w^i-(1+f)w^{i-1})\cdot G(w^i) \nonumber \\
=&1\cdot 1+\underset{i=1}{\overset{k-1}{\sum}}(w^i-(1+f)w^{i-1})\cdot \frac{1}{w^i} \nonumber \\
=&1+(k-1)\cdot \frac{w-1-f}{w} 
\end{eqnarray}
Hence if $c$ is the competitive ratio we have that.
\begin{eqnarray}
\frac{1}{c}=\frac{P}{V} \nonumber 
=&\frac{1+(k-1)\cdot (w-1-f)/w}{1+\ln(y)} \nonumber \\
<& \frac{1}{\ln(y)}+\frac{(k-1)\cdot (w-1-f)/w}{(k-1)\cdot \ln(w)} \nonumber \\
\leq& \frac{1}{\ln(y)}+\underset{u}{\max}(\frac{u-1-f}{u\cdot \ln(u)}) \nonumber \\
\leq&  \frac{1}{\ln(y)}-\frac{1}{W \left( \frac{-1}{e  (1+f)} \right)}
\end{eqnarray}
\end{proof}

{\footnotesize
\bibliographystyle{splncs}
\bibliography{buyback}

\begin{thebibliography}{10}

\bibitem{BHK-ec09}
Babaioff, M., Hartline, J.D., Kleinberg, R.:
\newblock Selling ad campaigns: online algorithms with buyback.
\newblock In: Proc. 10th ACM Conference on Electronic Commerce. (2009)

\bibitem{CFMP-soda09}
Constantin, F., Feldman, J., Muthukrishnan, S., P\'al, M.:
\newblock An online mechanism for ad slot reservations with cancellations.
\newblock In: Proc. 20th Annual ACM-SIAM Symposium on Discrete Algorithms
  (SODA). (2009)  1265--1274

\bibitem{CFMP-workshop}
Constantin, F., Feldman, J., Muthukrishnan, S., P\'al, M.:
\newblock An online mechanism for ad slot reservations with cancellations.
\newblock In: Proc. 4th Workshop on Ad Auctions. (2008)

\bibitem{BHK-workshop}
Babaioff, M., Hartline, J.D., Kleinberg, R.:
\newblock Online algorithms with buyback.
\newblock In: Proc. 4th Workshop on Ad Auctions. (2008)

\bibitem{RequestAnswer}
Ben-David, S., Borodin, A., Karp, R., Tardos, G., Wigderson, A.:
\newblock On the power of randomization in on-line algorithms.
\newblock Algorithmica \textbf{11}(1) (1994)  2--14

\bibitem{BorodinElYaniv}
Borodin, A., El-Yaniv, R.:
\newblock Online Computation and Competitive Analysis.
\newblock Cambridge University Press (1998)

\bibitem{TRAG}
Aggarwal, G., Ailon, N., Constantin, F., Even-Dar, E., Feldman, J., Frahling,
  G., Henzinger, M.R., Muthukrishnan, S., Nisan, N., P\'{a}l, M., Sandler, M.,
  Sidiropoulos, A.:
\newblock Theory research at {G}oogle.
\newblock SIGACT News \textbf{39}(2) (2008)  10--28

\bibitem{BCFG99}
Biyalogorsky, E., Carmon, Z., Fruchter, G.E., Gerstner, E.:
\newblock Research note: Overselling with opportunistic cancellations.
\newblock Marketing Science \textbf{18}(4) (1999)  605--610

\bibitem{SandholmL01}
Sandholm, T., Lesser, V.R.:
\newblock Leveled commitment contracts and strategic breach.
\newblock Games and Economic Behavior \textbf{35} (2001)  212--270

\bibitem{oxley}
Oxley, J.:
\newblock Matroid Theory.
\newblock Oxford University Press (1992)

\bibitem{YaoPrinciple}
Yao, A.C.C.:
\newblock Probabilistic computations: Toward a unified measure of complexity.
\newblock In: Proceedings of the 18th Annual Symposium on Foundations of
  Computer Science, Washington, DC, USA, IEEE Computer Society (1977)  222--227

\end{thebibliography}
}
\appendix
\section{Base case}
We prove here the base case in the inductive hypothesis of proof of lemma \ref{geometric}. Consider the payoff of the algorithm.
\begin{eqnarray}
P=\underset{i=1}{\overset{k}{\sum}} (u_i-(1+f)\times u_{i-1})\times G(u_i) \label{simplifiedyao}
\end{eqnarray}
Similar to the inductive case we rewrite the equation as a function of $u_1$, using $\rho_1$ to denote the sum of all terms on the right side except for the $1^{st},2^{nd}$ terms. 
\begin{eqnarray}
P=&u_1\times G(u_1)+(u_2-(1+f)\times u_1)\times G(u_2)+\rho_1 \nonumber \\
=&u_1\times \frac{1}{u_1}+(u_2-(1+f)\times u_1)\times \frac{1}{u_2}+\rho_1 \nonumber \\
=&-(1+f)\times \frac{u_1}{u_2}+1+1+\rho_1 \nonumber
\end{eqnarray}
It is easy to see that $P$ is a decreasing function of $u_1$.
Hence $u_1=1=u_2^{\frac{1-1}{1}}$.

% \section{Lower Bound against adaptive offline adversaries}
% The lower bound on randomized online algorithms with buy back against adaptive offline adversaries is a simple application of Theorem 7.3, page 105 of \ref{}.

% \begin{definition} \ref{}
% A request-answer system consists of a request set R, a sequence of finite nonempty answer sets $A_1,A_2,\ldots$ and a sequence of cost functions, $cost_n:R^n\times A_1\times A_2\times \ldots \times A_n\rightarrow \mathbb{R}^+\cup\{\infty\},n=1,2,\ldots$.
% \end{definition}

% The concept of request-answer games was introduced in \ref{}. It is easy to see that online algorithms for single items auction with buy back and the generatlization to the matroid case is a request-answer system.

% \begin{theorem} \ref{} \label{adaptive}
% Let ALG be a $\gamma$-competitive randomized online algorithm against adaptive-offline adversaries for a request-answer game. Then there exists a deterministic $\gamma$-competitive online algorithm.
% \end{theorem}
% For the request-answer game defined in \ref{}, the goal is to minimize the cost of the algorithm. This is different from our objective as we try to maximize the payoff of the algorithm. But we note that the proof of Theorem \ref{adaptive} works for our case with very trivial modifications.

\end{document}